\documentclass{svmult}
  
\usepackage{mathptmx,hyperref}   
\usepackage{helvet} 
\usepackage{courier}
\usepackage{amsmath,amsfonts}
\usepackage{graphicx}
\usepackage[bottom]{footmisc}          
    
\usepackage{mathptmx}         
\usepackage{helvet}            
\usepackage{courier}            
\usepackage{type1cm}             
\usepackage{makeidx}             
\usepackage{graphicx}          
\usepackage{multicol}          
\usepackage[bottom]{footmisc}  
     
\usepackage{epsfig}     
\usepackage{psfrag,rotating}     
\usepackage{amssymb,floatflt,enumerate} 
\usepackage{amsmath,amscd,psfrag,leqno} 
\usepackage[mathscr]{eucal}  
\usepackage{shadethm}           
\usepackage{graphicx}   
\usepackage{overpic,contour}       
\contourlength{0.3mm}  
 
\def\Beweisende{\square}            
\def\BewEnde{\hfill{\Beweisende}}

\def\phm{{\hphantom{-}}}

\def\NN{{\mathbb N}}

\def\Vkt#1{{\mathbf #1}}

\usepackage[dvipsnames]{xcolor}

\makeindex
\begin{document}

\title*{Flexible discrete translational surfaces}

% Use \titlerunning{Short Title} for an abbreviated version of
% your contribution title if the original one is too long
\author{Georg Nawratil}
\authorrunning{G. Nawratil}
% Use \authorrunning{Short Title} for an abbreviated version of
% your contribution title if the original one is too long
\institute{
  Institute of Discrete Mathematics and Geometry \&  
	Center for Geometry and Computational Design, TU Wien, Austria \\
    \email{nawratil@geometrie.tuwien.ac.at}
   }

%
% Use the package "url.sty" to avoid
% problems with special characters
% used in your e-mail or web address
%
\maketitle

\abstract{We give a full list of translational nets 
which flex within their class of discrete surfaces of translation, by reducing the classification problem to the one of flexible complete bipartite frameworks on the sphere, for which the solution is known. We also obtained two novel classes which correspond to Bottema's spherical 16-bar mechanisms and the constant diagonal angle frameworks. Based on an algorithm for the construction of all flexible translational nets, we also discuss flexible translational tubes and toroids. Furthermore, we present novel results for both topologies which are implied by  Bottema's spherical 16-bar mechanisms. 
}

\section{Introduction and review}\label{sec:intro}

In the last decades, interest in flexible quad surfaces has increased due to their relevance for the transformable design in the context of rigid origami. More precisely, we consider nets of planar quadrilateral (PQ) faces, that are hinged by rotational joints in the combinatorics of a square grid. The faces have to be strictly convex or strictly non-convex in order to guarantee their quadrilateral shape; i.e.\ no degenerations into triangles, line-segments or points are allowed.

These PQ-nets are in general rigid, but certain geometries allow for a finite flex of the structure, which is also known as rigid-foldability, isometric deformation or continuous flexibility. In the remainder of this paper, we limit ourselves to the terms {\it flexible, flexibility} and  {\it flex}, which are always meant in context of finite mobility.

In the generic case the flexibility of a PQ-net results form the flexibility of each $(3\times 3)$ subnet as originally stated in \cite{BHS08}, but with an erroneous formulation of the degenerate cases. This was corrected in the recent paper \cite{sobhan}, where it was pointed out that genericity is guaranteed if the PQ-net has no pair of coplanar adjacent faces. In more detail this theorem reads as follows:

\begin{theorem} \label{thm:sobhan} \cite[Theorem 5]{sobhan}
If a $(m\times n)$ PQ-net has no pair of coplanar adjacent faces and all of its
$(3\times 3)$ subnets are flexible, then the net is flexible as a whole. 
\end{theorem}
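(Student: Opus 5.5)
The plan is to reduce flexibility of the whole net to a propagation argument on the spherical image. At every interior vertex four faces meet, so the four edge directions form a spherical four-bar linkage: the edge directions are points on the unit sphere, and the face angles are the fixed spherical arc lengths. Such a four-bar has one degree of freedom, provided it is non-degenerate. The dihedral angles along the edges of one row of the net then form a chain. Each vertex is the coupler of its four-bar, and consecutive vertices share an edge direction or dihedral angle.

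The first step is to fix a $(2\times 2)$ vertex block, that is, one $(3\times 3)$ subnet of faces. By hypothesis it has a one-parameter flex. We parametrize that flex by a single dihedral angle $\varphi$ at a chosen interior edge.

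The second step is an induction that extends the flex strip by strip. We add one new row of faces, say to pass from an $(m\times n)$ net to an $((m+1)\times n)$ net. The new vertices are the interior vertices of the new $(3\times 3)$ subnets along that row. Take the first new vertex. Its four-bar is determined by one dihedral angle that is already known as a function of $\varphi$, namely the angle along the shared vertical edge. The flexibility of the $(3\times 3)$ subnet containing it guarantees that this four-bar can be closed consistently with the old one.

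Going along the row, each further new vertex receives two already-determined inputs. One is the dihedral angle at the edge it shares with the old row. The other is the dihedral angle at the new horizontal edge, produced by the previous new vertex. Each such vertex is a one-degree-of-freedom four-bar, so these two inputs are over-determined and must be compatible. Compatibility is exactly the flexibility of the $(3\times 3)$ subnet formed by the two adjacent old and new vertices. Algebraically, the transmission functions of the four four-bars around the quad cell compose to the identity on a branch.

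The main obstacle is making this branch argument rigorous. The input–output relation of a spherical four-bar is a $2$-$2$ correspondence, not a function. The flex of a $(3\times 3)$ subnet guarantees only that some branch closes, and we must show that the branches chosen in overlapping subnets agree. This is where the hypothesis of no coplanar adjacent faces enters. If two adjacent faces are coplanar, the corresponding dihedral angle is $0$ or $\pi$. There the four-bar passes through a bifurcation point, and the correspondence ramifies, so two subnet flexes can leave on different branches. Away from this, every four-bar is non-degenerate, and the implicit function theorem gives locally unique analytic transmission maps. The continuations from two overlapping $(3\times 3)$ subnets then agree on an open set of $\varphi$, and so they agree identically by analyticity.

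To carry this out, I would write each vertex relation as the standard biquadratic equation in the half-angle tangents $\tan(\varphi_i/2)$ of adjacent dihedral angles. The non-coplanarity condition excludes the degenerate values $0$ and $\infty$. I would check that the derivative condition of the implicit function theorem then holds. The final step is to assemble the local branches over the whole grid by the row-and-column induction, which yields a continuous one-parameter deformation of the full $(m\times n)$ net.
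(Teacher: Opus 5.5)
The paper does not prove this theorem; it quotes it from \cite{sobhan} as the corrected form of the propagation statement in \cite{BHS08}. Your argument is that propagation argument, and it is correct.

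The step you defer, checking the implicit-function condition, is where the hypothesis does its work, and it goes through. At an interior vertex with edge directions $\Vkt e_1,\ldots,\Vkt e_4$ in cyclic order, fix the face spanned by $\Vkt e_1,\Vkt e_2$. Rotate $\Vkt e_4$ about $\Vkt e_1$ by $\theta_1$ and $\Vkt e_3$ about $\Vkt e_2$ by $\theta_2$; up to sign and additive constants, these are the dihedral angles at $\Vkt e_1$ and $\Vkt e_2$. The closure equation $F=\langle\Vkt e_3,\Vkt e_4\rangle-\cos\alpha_{34}=0$ then has
\[
\partial F/\partial\theta_2=\det(\Vkt e_2,\Vkt e_3,\Vkt e_4), \qquad \partial F/\partial\theta_1=\det(\Vkt e_1,\Vkt e_4,\Vkt e_3).
\]
These vanish exactly when the two faces at $\Vkt e_3$, respectively at $\Vkt e_4$, are coplanar. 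So with no flat dihedral angle, every dihedral angle is a local analytic coordinate on every vertex four-bar, and every transmission map is a unique germ. Consequently the flex of each $(3\times 3)$ subnet through the given configuration must coincide with the partial flex you have already built, which is the compatibility you need.

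Three small corrections:
\begin{itemize}
\item A single flat dihedral angle is not a bifurcation of the vertex four-bar; that requires all four edges on one great circle. It is a turning point of the dihedral angle at the opposite edge, and that is why the correspondence ramifies there.
\item The first new vertex of a row imposes no compatibility condition at all. It is simply driven by the one known dihedral angle.
\item Only germs near the initial configuration are needed. Making the continuations agree ``identically'' is therefore superfluous, and beyond a neighbourhood of the initial configuration the non-coplanarity hypothesis is no longer available anyway.
\end{itemize}
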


Therefore the classification of these flexible $(3\times 3)$ PQ-nets and their geometric understanding plays a central role in the theory of flexible discrete surfaces. Their systematic study dates back to Kokotsakis, who listed in 
\cite[§16--§18]{kokotsakis} the following cases: 
\begin{enumerate}[(a)]
    \item 
    The V-hedral\footnote{All inner vertices of the $(3\times 3)$ PQ-net are V-hedral, which means that opposite angles at the vertex are equal.} case already known to Sauer and Graf \cite{sauer_graf}, but he also mentioned for the first time the anti-V-hedral\footnote{All inner vertices of the $(3\times 3)$ PQ-net are anti-V-hedral, which means that opposite angles at the vertex are supplementary.} case and the two possible hybrid\footnote{Consisting of V-hedral and anti-V-hedral vertices.} cases. 
    \item 
    The T-hedral case already known to Sauer and Graf \cite{sauer_graf}.
    \item 
    The developable case of the line-symmetric type of Stachel \cite{stachel2}  also known as Kokotsakis tilling.
    \item 
    The planar symmetric case, which is a trivial one like the translational case mentioned by Stachel \cite[Sec.\ 1.2]{stachel2}. 
\end{enumerate}
Based on spherical kinematic geometry \cite{stachel2} a partial classification of flexible $(3\times 3)$ PQ-nets was obtained by Stachel and the author \cite{Naw11,Naw12,NS10}. 
Inspired by this approach, Izmestiev \cite{Izm17} obtained a full classification in 2017. Nevertheless the complete list of cases is known, there are only a few results on flexible PQ-nets larger than $(3\times 3)$-patches  available in the literature until now: 
\begin{enumerate}[(A)]
    \item 
    It is known that V-hedral, anti-V-hedral and hybrid nets of infinite size can be constructed. For an overview on these well studied PQ-nets we refer to  \cite{kilian}.
    \item 
    PQ-nets of arbitrary size can also be constructed within the classes of T-hedra \cite{kiumars3} and P-hedra \cite{phedra2}, which can both be seen as generalizations of flexible axial cone-nets studied in \cite{phedra1}.  
    \item 
    Some initial stitching solutions of different types of flexible $(3\times 3)$ PQ-nets were given by He and Guest \cite{HG18}. Recently, a 
    repetitive stitching construction method has been presented in \cite{he26} that enables an infinite mesh refinement for two families of PQ-nets generated from linear and equimodular couplings of Izmestiev's classification.  The pattern constructed by equimodular couplings generalizes the approach presented in \cite{Dieleman}, which already includes the Kokotsakis tiling discussed in detail by Huffman \cite{huffman} and Stachel \cite{tile}.
\end{enumerate}
Note that the flexibility of PQ-nets is not a property of the extrinsic
geometry but of the intrinsic one, which is determined by the corner angles of the
planar quads only. Nonetheless, the discrete quad surfaces listed in (A) and (B) allow direct access to their spatial shape through the use of control polylines (see e.g.\ \cite{kilian,phedra2,kiumars3}). Due to this intuitive design methods these classes are well suited for interactive transformable design.

The classes of V-hedra and T-hedra have also flexible smooth counterparts, which are the Voss surfaces \cite{voss_orig} (smooth V-surfaces) and profile affine surfaces \cite{sauer,sauer_graf}, respectively, which we call smooth T-surfaces in the remainder of this paper.  For a contemporary study of the relation between the smooth and the discrete case of these flexible surfaces we refer to the works \cite{ivan2,ivan1,tubes}, which also discuss the intermediate case of semi-discretization. 
In contrast, P-hedra only have semi-discrete analogues but no smooth ones due to their construction outlined in \cite{phedra2}, which is based on the consecutive application of projective generalizations of the two trivial cases listed in (d). As a consequence, one of the input polylines is composed of line-segments parallel to only four directions, while the other one can be smoothed.

\subsection{Motivation and outline}

The class of T-surfaces can be divided into several subclasses \cite{ivan1,sauer,sauer_graf}, which include, among others, rotational surfaces, moulding surfaces and translational surfaces. These are kinematical surfaces as they result from the sweep of a rigid curve under a continuous Euclidean motion.  It is remarkable that a T-surface remains within its subclass during the associated flex, which holds true in the smooth, discrete and semi-discrete setting. Note that this property is not self-evident, which can be seen by rotational V-surfaces \cite{ivan2,tachauer}, e.g.\ the catenoid flexes into the right helicoid.  

From now on, we want to focus on translational surfaces, which can be generated by translating one space curve ({\it generatrix}, which is called {\it profile}  in case of a planar curve) along another space curve ({\it trajectory}). It is well-known that the same surface is swept if the two curves change their roles. 
In the discrete setting the trajectory and generatrix are two polylines $\Vkt t=(\Vkt t_0,\ldots ,\Vkt t_m)$ and $\Vkt g=(\Vkt g_0,\ldots ,\Vkt g_n)$ with $m\in\NN$ and $n\in\NN$, respectively, vertices. By applying the discrete translations implied by one of the polylines to the other one, we obtain a discrete surface with parallelogram  faces. Therefore a discrete translational surface is automatically a PQ-net. Therefore, we can restrict to the term {\it translational net} in the remainder of the paper.
 
Note that a translational $(m\times n)$ net keeps the intrinsic property of being flexible under the so-called {\it parallelism operation}, which was already used in \cite{sauer,sauer_graf}. In general it can be applied to any flexible PQ-net, but in case of translational surfaces it boils down to the change of lengths of the line-segments of the two input polylines $\Vkt t$ and $\Vkt g$. This procedure even includes negative lengths, which correspond  to the change of the orientation of the line-segment into the opposite direction. Moreover, we can assume $m,n>0$ and $(m,n)\neq (1,1)$, because for $m=n=1$ we end up with a rigid parallelogram.

To the best of the author's  knowledge, the following two classes of flexible translational $(m\times n)$ nets are reported in the literature so far (cf.\ Nassar \cite{nassar}):
\begin{enumerate}
    \item [$N_0$:]
    {\bf Trivially flexible translational nets:} These translational nets are generated by an arbitrary trajectory and a zigzag profile with only two slopes (i.e.\ every second line-segment is parallel). Its flexibility results from the fact that each  $(m\times 3)$-strip belongs to the trivial translational case mentioned in (d) taking the parallelism operation into account. As this property is preserved during the flex, this PQ-net deforms within the class of discrete translational surfaces. As already noted above this also holds true for the next class.
     \item [$N_1$:]
    {\bf Translational T-hedra:} These flexible nets are obtained if the generatrix and the trajectory are both planar and are located in orthogonal planes \cite{ivan1,sauer,sauer_graf}. 
\end{enumerate} 
No further contemporary works on this topic are known to the author, which dates back to \cite[§33]{peterson} according to Voss's encyclopedic article \cite[Nr.\ 26]{voss_ency}, where further historical references regarding the smooth case can be found. 
This raises the question if further flexible translational nets exist. In the paper at hand, which is structured as follows, we give a full classification of these PQ-nets which flex within the class of discrete surfaces of translation: 

In Section \ref{sec:equi} we show that this classification problem is equivalent to the one of flexible complete bipartite framework on the sphere, for which the solution is known. Based on this we present in Section \ref{sec:construct} the construction of all flexible translational nets using two input polylines, which also extends the list of known flexible PQ-nets of infinite size given in items (A,B,C) of Section \ref{sec:intro}. 
Especially, we discuss flexible translational tubes and toroids in the Sections \ref{sec:tube} and \ref{sec:toroid}, respectively. Finally, we conclude the paper in Section \ref{sec:conclusion}.

\section{Equivalent spherical problem}\label{sec:equi}

In this section we transfer the problem of determining all flexible translational quad-nets which flex within the class of discrete surfaces of translation to the sphere. This is motivated by the following theorem of Stachel \cite{stachel2}:

\begin{theorem}\label{thm:stachel}  \cite[Theorem 1]{stachel2}
A $(3\times 3)$ PQ-net is flexible if and only if its spherical image is flexible.
\end{theorem}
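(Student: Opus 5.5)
The spherical image of a $(3\times 3)$ PQ-net is the configuration of unit normal vectors of its nine faces. Two faces that share a hinge edge have normals that are both orthogonal to the direction of that edge. So the great-circle arc joining the two normals lies in the plane orthogonal to the edge, and its spherical length equals the exterior dihedral angle at that hinge. At each of the four inner vertices the four incident faces give four normals forming a spherical quadrilateral, which is the Gauss image of the vertex star. Its side lengths are the exterior dihedral angles, and its interior angles are determined by the face angles: each angle equals $\pi$ minus the corresponding planar corner angle. The spherical image therefore has the same $(3\times 3)$ combinatorics: a net of four spherical quadrilaterals whose interior angles are fixed (intrinsic data of the PQ-net) and whose side lengths vary. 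By "flexible" for the spherical image I take the dual notion: a continuous family of such spherical quad configurations in which the angles stay fixed and the four quadrilaterals stay glued along their common sides, not all congruent to one another.

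The proof goes through one vertex at a time. A single PQ vertex of degree four, with its four fixed face angles $\alpha_1,\dots,\alpha_4$, is equivalent to a spherical four-bar linkage. Its dual is the spherical quadrilateral of normals with angles $\pi-\alpha_i$, whose free parameters are the dihedral angles. So each vertex star has one degree of freedom, and its configuration is parametrized by the dihedral angles at its four edges.

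I would then carry out the following steps in order.

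\emph{Step 1.} Show that assigning to a realization of the PQ-net its normal configuration gives a bijection, up to rigid motions and up to the obvious orientation choices, between two sets:
\begin{itemize}
\item spatial realizations of the net with the prescribed corner angles, and
\item compatible spherical quad-nets with the prescribed angles.
\end{itemize}
The forward direction is clear. For the converse, given the normals, each face plane is determined up to translation. The edge directions are then the cross products of adjacent normals, and the face angles fix the edge-length ratios within each face. The $(3\times 3)$ net can be rebuilt face by face because its combinatorics is simply connected: there is no loop around a face with a non-trivial holonomy condition other than the planar closing of each quad, and that closing follows from the prescribed angles.

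\emph{Step 2.} Observe that this correspondence is continuous in both directions. Hence a continuous non-trivial flex of the PQ-net, meaning one in which the dihedral angles are not all constant, corresponds exactly to a continuous non-trivial deformation of the spherical image.

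\emph{Step 3.} Eliminate the trivial motions. A rigid motion of the PQ-net induces a rotation of the sphere, and conversely. This settles both directions of the theorem.

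The main obstacle is the converse part of Step 1: reconstructing a spatial PQ-net, with prescribed planar quads, from a flexing spherical image. Knowing the normals gives the face planes and the edge directions. One must then check that the four inner edges meeting at the central face, together with the outer faces, close up consistently with the fixed intrinsic quad shapes. I would handle this by the parallelism operation: the normal data determines the net only up to parallel edge rescalings. The intrinsic corner angles alone already fix the spherical angles, and hence fix flexibility, independently of edge lengths. So it suffices to exhibit one PQ-net realizing each normal configuration. That net can be built by choosing edge lengths consistent around each face, which is possible because a quad with given edge directions in a plane closes for a two-parameter family of lengths.

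Degenerate cases also need a separate remark: coplanar adjacent faces, which make the arc length zero, and antipodal normals. There one argues by continuity, or excludes them as the paper's genericity assumptions do.
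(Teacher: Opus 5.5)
\textbf{There is a genuine gap.} The paper does not prove this statement; it quotes it from Stachel. It does, however, fix what ``spherical image'' means. Every edge is translated through the origin and becomes a point $\pm\mathbf d_e$ of the unit sphere. Each face becomes great-circle arcs between the images of its consecutive edges, with arc lengths equal to the corner angles. Flexibility of the image is therefore a non-rotational motion of a spherical bar linkage (a four-bar at each vertex) in which each face's points stay on one great circle with fixed mutual distances.

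You instead take the Gauss image of the face normals and define its flexibility yourself (fixed angles $\pi-\alpha_i$, varying sides). That object is the polar of the paper's one. You never prove the polarity step, namely that your normal quadrilaterals glued along common arcs flex exactly when the four-bars with bar lengths $\alpha_i$, glued at common points and along rigid face circles, do. So as written you prove a different theorem. This matters for the paper. It needs parallel edges of a translational net to have the \emph{same} image point, which turns the image into a complete bipartite framework with fixed bar lengths; only the edge image has this property. Your object also degenerates exactly when adjacent faces are coplanar: the normals coincide, $\mathbf n\times\mathbf n'$ gives no edge direction, and the angle at a zero-length side is undefined. The statement has no hypothesis excluding this, and the edge image is unaffected because its arc lengths are corner angles in $(0,\pi)$. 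So ``excluding them'' is not available here.

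Second, your converse direction does not go through as written. The Step~1 ``bijection'' compares realizations with free edge lengths to spherical configurations, so it is not injective. The claim that the face angles fix the edge-length ratios of a face is false (rectangles of different aspect ratios show this), and it contradicts your own later two-parameter remark. Exhibiting \emph{some} PQ-net for each configuration does not give a flex of the \emph{given} net. Asserting that the angles alone decide flexibility assumes what is to be shown.

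The missing argument is short. Keep the original lengths $\ell_e$ and choose the image points $\mathbf d_e(t)$ continuously. Define the vertices at time $t$ by summing $\varepsilon_e\ell_e\mathbf d_e(t)$, with fixed signs $\varepsilon_e$, along edge paths from a fixed vertex. Each face is a rigid piece of the image, so its edge directions are $R_F(t)\mathbf d_e(0)$ for a single rotation $R_F(t)$. Hence $\sum_{e\in\partial F}\varepsilon_e\ell_e\mathbf d_e(t)=R_F(t)\sum_{e\in\partial F}\varepsilon_e\ell_e\mathbf d_e(0)=\mathbf 0$. The face boundaries generate all cycles of the simply connected grid, so the vertices are well defined, each face is congruent to the original, and everything depends continuously on $t$. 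The motion is not rigid, since a rigid motion would rotate all $\mathbf d_e$ by one common rotation. The forward direction is just the translation of the edges to the origin.
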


At this point we have to explain how the mentioned spherical image is obtained:  One shifts all edges of the PQ-net through the origin and intersects the resulting set of lines with the unit sphere centered in the origin. Each line gives rise to a pair of antipodal points on this sphere. According to \cite{stachel2} it does not matter which of the two points we consider as its spherical image. The face spanned by two edges sharing the same vertex of the PQ-net is mapped to the arc of a great circle between the corresponding  spherical image points of the two edges. Moreover, we can assume without loss of generality that its arc length is within the interval $(0;\pi)$, which is open 
due to the basic assumption on PQ-nets that the faces have to be strictly convex or strictly non-convex (cf.\ first paragraph of Section \ref{sec:intro}).

Theorem \ref{thm:sobhan} and Theorem \ref{thm:stachel} imply together the following statement: 

\begin{corollary}
A $(m\times n)$ PQ-net with no pair of coplanar adjacent faces is flexible if and only if its spherical image is flexible.
\end{corollary}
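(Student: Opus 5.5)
The plan is to combine Theorem \ref{thm:sobhan} with Theorem \ref{thm:stachel}, applying the latter to every $(3\times 3)$ subnet. The bridge is that the spherical image of the whole $(m\times n)$ PQ-net is the union of the spherical images of its $(3\times 3)$ subnets. An edge of the net maps to a point of the sphere. A face, viewed as the angle between two edges at a vertex, maps to a great-circle arc whose length equals the corresponding corner angle, taken in $(0;\pi)$. Hence the spherical framework of a subnet is the subframework of the global spherical image obtained by restricting to the edges and faces of that subnet.

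For the direction ``net flexible $\Rightarrow$ image flexible'', take a continuous flex of the PQ-net. During it the faces stay rigid, so all corner angles stay constant. Translating the edges to the origin at every time instant gives a continuous family of spherical configurations with constant arc lengths. This is a flex of the spherical image. It is non-trivial: if the image moved only by a rotation of the sphere, then all dihedral angles would stay fixed and the net would move rigidly. This direction is essentially already contained in Theorem \ref{thm:stachel} applied to each subnet, so it could also be quoted.

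For the converse, assume the spherical image of the whole net flexes. Restricting this flex to the image of each $(3\times 3)$ subnet gives a flex of that subframework. One has to check that the restriction is non-trivial, and this is where the hypothesis of no coplanar adjacent faces enters. For a pair of adjacent faces, the relevant dihedral angle is the angle between two arcs sharing an endpoint. If the global flex changes this angle at one spherical vertex, then every subnet containing the two faces sees a non-rigid motion. Now suppose some subnet image moved only rigidly. Then the dihedral angles inside that subnet would be frozen. Propagating along the grid through the $(3\times 3)$ subnets, which overlap in $(2\times 3)$ or $(3\times 2)$ strips, I would argue that under genericity (no coplanar adjacent faces, i.e.\ no coincident spherical points) freezing one subnet forces freezing of its neighbors, and so of everything. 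That contradicts non-triviality of the global flex. Alternatively, and more simply, one can apply Theorem \ref{thm:stachel} to each subnet only in the form ``image flexible $\Rightarrow$ subnet flexible'', and then invoke Theorem \ref{thm:sobhan} to conclude that the whole net is flexible.

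The main obstacle is this converse direction, namely making sure that global flexibility of the image gives a \emph{non-trivial} flex of every $(3\times 3)$ sub-image. I would handle it using the fact that in a $(3\times 3)$ spherical quad-framework with no degenerate (coincident or antipodal) vertices, a single degree-4 spherical vertex with fixed arcs determines the remaining angles. Rigidity of one part would then transmit along the strips. If that propagation argument becomes delicate, I would instead take the trivially-moving subnets to be rigid as nets and still obtain a consistent flex of the whole net by gluing. Here I would note that Theorem \ref{thm:sobhan} only needs flexibility of the individual subnets together with the absence of coplanar adjacent faces.
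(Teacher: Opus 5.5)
Your proposal is correct and follows the paper's route: the paper gives no argument beyond stating that the corollary follows by combining Theorem \ref{thm:sobhan} with Theorem \ref{thm:stachel}, applied to the $(3\times 3)$ subnets, and your propagation argument for why the restricted flexes are non-trivial supplies a detail the paper leaves implicit. Two small points: the shortcut you call simpler, and the gluing fallback, do not actually bypass that non-triviality check, since Theorem \ref{thm:sobhan} needs every $(3\times 3)$ subnet to be flexible; and the degeneracy you must exclude is a pair of collinear opposite edges at an inner vertex, which the absence of coplanar adjacent faces rules out but is not equivalent to.
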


Clearly, this statement also holds for the subclass of flexible translational nets, but they have a very special spherical image as pointed out in the next theorem:

\begin{theorem}\label{thm:equi}
A translational net with no pair of coplanar adjacent faces can flex within the class of discrete surfaces of translation if and only if its spherical image is flexible. Without loss of generality we can assume that this image is a nondegenerate  nonoverlapping complete bipartite framework on the sphere. Moreover, the flex of the translational surface has to be 1-dimensional.
\end{theorem}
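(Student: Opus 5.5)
Let $\Vkt t$ and $\Vkt g$ be the two input polylines, with edge vectors $\Vkt a_i=\Vkt t_i-\Vkt t_{i-1}$ for $i=1,\ldots,m$ and $\Vkt b_j=\Vkt g_j-\Vkt g_{j-1}$ for $j=1,\ldots,n$. The plan is to describe the spherical image of the translational net explicitly, observe that it is a complete bipartite framework, and then show that flexes of this framework correspond exactly to flexes of the net within the translational class.

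\textbf{Step 1: structure of the spherical image.} Every edge of the net is a translate of some $\Vkt a_i$ or some $\Vkt b_j$. Hence the edge directions give at most $m+n$ spherical image points: $A_i$ for $\Vkt a_i$ and $B_j$ for $\Vkt b_j$. Every face is a parallelogram spanned by one pair $(\Vkt a_i,\Vkt b_j)$, and every such pair occurs. So the spherical image is the graph $K_{m,n}$ on $\{A_i\}\cup\{B_j\}$, with bar $A_iB_j$ of length $\angle(\Vkt a_i,\Vkt b_j)\in(0;\pi)$; this interval is open by the strict convexity assumption. For the "without loss of generality" claim we make two reductions.
\begin{itemize}
\item If consecutive edges $\Vkt a_i,\Vkt a_{i+1}$ are parallel, then the adjacent faces spanned with $\Vkt b_j$ are coplanar, which is excluded. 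So consecutive image points are distinct, including antipodality since we may choose either antipode.
\item Non-consecutive parallel edges can coincide in the image. Then $A_i=A_k$ is one vertex carrying identical bar lengths, so duplicates can be merged. Merged (possibly repeated) image points give an overlapping framework, and flexibility is unaffected by identifying them.
\end{itemize}
After merging we obtain a nondegenerate, nonoverlapping $K_{m',n'}$. We must also exclude the degenerate case where all $A_i$ lie on one point pair. If $m'=1$ the net is trivially a strip; this case is handled separately or counted as degenerate.

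\textbf{Step 2: the equivalence.} Suppose the net flexes within translational surfaces. Then at each instant it is generated by polylines with edge vectors $\Vkt a_i(\tau),\Vkt b_j(\tau)$ whose lengths are fixed, since edge lengths are preserved. The face angles are also fixed. So the points $A_i(\tau),B_j(\tau)$ give a motion of the bipartite framework, nontrivial unless the net moves rigidly. Conversely, a flex $A_i(\tau),B_j(\tau)$ of the spherical framework preserving all bar lengths determines, with the original lengths $|\Vkt a_i|,|\Vkt b_j|$, new polylines, and hence a new translational net. Its faces are congruent parallelograms, so it is an isometric deformation, and it is translational by construction. By Theorem~\ref{thm:equi}'s hypothesis of no coplanar adjacent faces, the Corollary also guarantees consistency with general PQ-net flexibility. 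The only point to check is that no PQ-flex of a translational net leaves the class. We do not need that: the statement is about flexing within the class, and the correspondence above is bijective modulo rotations of the sphere and Euclidean motions.

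\textbf{Step 3: one-dimensionality, the main obstacle.} The claim is that the configuration space of a nondegenerate, nonoverlapping flexible $K_{m,n}$ on the sphere, with $(m,n)\neq(1,1)$ and $m,n\geq 2$ after reduction, is one-dimensional. First try a count of dimensions. A $K_{2,2}$ is a spherical four-bar, which has one degree of freedom. Each further vertex attached to the existing ones by at least two bars is locally rigid relative to them, generically or at least under nondegeneracy: two circles on the sphere meet in isolated points unless they coincide, and coincidence is exactly the overlapping or degenerate situation. Hence the mobility cannot exceed that of a four-bar subframework, namely $1$, and flexibility makes it exactly $1$. The delicate part is the circle-coincidence argument when adding vertices. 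I would handle it by showing that coincident circles force two $B$-vertices to coincide or be antipodal, which contradicts nonoverlapping after Step 1's reduction. I would also invoke the known classification of flexible bipartite spherical frameworks to confirm the configuration curves.
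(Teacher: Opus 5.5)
Your proof is correct, but it is organised differently from the paper's in all three components.

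\textbf{What the paper does.} It gets 1-dimensionality by citing Theorem~\ref{thm:spezi} together with Lemma~\ref{lem:spezi}. It takes the equivalence with flexibility of the spherical image from the Corollary, i.e.\ from Theorems~\ref{thm:sobhan} and~\ref{thm:stachel}. It obtains the nonoverlapping normal form by moving \emph{along the flex}: overlaps generically occur at only finitely many configurations, and points that overlap during the whole flex are identified. A branch argument handles the rest: a spherical four-bar can only switch assembly mode when two neighbouring bars lie on a great circle, i.e.\ when adjacent faces are coplanar.

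\textbf{What you do instead.}
(a) You prove the equivalence directly from the correspondence between edge vectors and image points. This needs neither the cited theorems nor the coplanarity hypothesis. The point you set aside is in fact automatic: every rigid face stays a parallelogram, so any flex of a translational net stays translational.
(b) You merge coinciding or antipodal directions \emph{at the given configuration}. This is legitimate: duplicating a vertex turns a flex of $K_{m',n'}$ into one of $K_{m,n}$. Conversely, restricting a flex of $K_{m,n}$ stays nontrivial, because a deleted vertex is pinned by two nonoverlapping neighbours.
(c) You replace Theorem~\ref{thm:spezi} by an elementary local fibre count over a nonoverlapping spherical four-bar. The same count applies verbatim to the unmerged framework, which is the one that actually encodes the net's flex.

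\textbf{What each route buys.} Yours is self-contained, and the framework is nonoverlapping already at the given configuration. The paper's route also tracks which branch the net actually flexes along, and that is what the 3-step algorithm and Theorem~\ref{thm:every} build on. If you want this as well, note the following. For consecutive $\Vkt b_j,\Vkt b_{j+1}$, the merged points $A_i$ and $A_k$ both lie among the two mirror-image intersection points of the circles about $B_j$ and $B_{j+1}$. They can therefore only separate if $A_i$ lies on the great circle through $B_j$ and $B_{j+1}$, i.e.\ if two adjacent faces are coplanar. This is the same mechanism as the paper's bifurcation argument.

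\textbf{Minor points.} Your explicit restriction to $m',n'\geq 2$ is well placed. A $(1\times n)$ strip with $n\geq 3$ has an $(n-1)$-dimensional flex. The paper's appeal to Lemma~\ref{lem:spezi} passes over this exception, because a single segment is trivially a straight line. Finally, the aside in Step~2 that invokes the hypothesis of Theorem~\ref{thm:equi} and the Corollary is unnecessary and reads as circular; drop it.
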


Before proceeding to the proof, we define the new terms used in this theorem:

\begin{definition}
A complete bipartite $(m,n)$ framework on the sphere consists of two sets of points
$\mathcal{P}=(\Vkt p_1,\ldots,\Vkt p_m)$ and  
$\mathcal{Q}=(\Vkt q_1,\ldots,\Vkt q_n)$ on the sphere. Moreover, every point $\mathcal{P}$ is connected with every point of $\mathcal{Q}$ by a spherical bar. This spherical framework is called: 
\begin{enumerate}[$\phm\bullet$]
    \item 
    {\it nondegenerate} if no bar has zero length or length $\pi$; i.e.\ 
    \begin{equation}
    \Vkt p_i\neq \pm\Vkt q_j \quad \text{for}\quad \forall i,j
    \end{equation}
    where $\pm$ indicate the antipodal point pair;
    \item
    {\it nonoverlapping} if not two points of a set coincide or are antipodal; i.e.\
    \begin{equation}
    \begin{split}
    &\Vkt p_i\neq \pm\Vkt p_j \quad \text{for}\quad i\neq j \\
    &\Vkt q_i\neq \pm\Vkt q_j \quad \text{for}\quad i\neq j 
    \end{split}
    \end{equation}
\end{enumerate}
\end{definition}

For proving Theorem \ref{thm:equi} we have to do some preparatory work, which is based on the following known result:

\begin{theorem} \label{thm:spezi} \cite[Theorem 3.1]{BHS08}
    A PQ-net admits at most a 1-parameter flex if not every vertex contains at least a pair of collinear opposite edges. 
\end{theorem}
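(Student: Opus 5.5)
The plan is to reduce the global flex to the local spherical four-bar linkages at the vertices and propagate dihedral angles through the square grid. Since the faces are rigid planar quads glued along edges, a configuration of the $(m\times n)$ PQ-net is determined, up to a rigid motion, by the dihedral angles at all its inner edges. So it suffices to show that near a given configuration all these dihedral angles are locally functions of a single real parameter. At an inner vertex $v$ with edges $e_1,e_2,e_3,e_4$ in cyclic order, I would use the spherical image of Theorem~\ref{thm:stachel}. This is a spherical quadrilateral with vertices $\Vkt e_1,\ldots,\Vkt e_4$ and fixed side lengths equal to the face angles at $v$, each in $(0;\pi)$. The dihedral angle at $e_i$ is the spherical angle of this quadrilateral at $\Vkt e_i$.

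\textbf{Local step.} Prescribing the spherical angle at $\Vkt e_1$ fixes, by the spherical law of cosines in the triangle $\Vkt e_4\Vkt e_1\Vkt e_2$, the diagonal $\Vkt e_2\Vkt e_4$. If $e_2,e_4$ are not collinear, this diagonal has length in $(0;\pi)$. The triangle $\Vkt e_2\Vkt e_3\Vkt e_4$ is then determined up to reflection in the great circle through $\Vkt e_2,\Vkt e_4$. Hence locally, away from the isolated branch points, all four angles at $v$ are analytic functions of the angle at $e_1$. By symmetry the same holds when starting from $e_2$, provided $e_1,e_3$ are not collinear. So if $v$ has \emph{no} collinear opposite pair, any one dihedral angle at $v$ locally determines the other three, and the configuration space of the vertex is at most 1-dimensional.

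If instead $e_2,e_4$ are collinear, I would show that the faces $(e_1,e_2)$ and $(e_1,e_4)$ lie in one plane, namely the plane spanned by $e_1$ and the line $e_2e_4$. Likewise the faces $(e_3,e_2)$ and $(e_3,e_4)$ lie in one plane. Consequently the dihedral angles at $e_1$ and $e_3$ are constant (flat) during any flex. Moreover the vertex folds like a single hinge along the line $e_2e_4$, so the dihedral angles at $e_2$ and $e_4$ coincide.

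\textbf{Propagation.} By hypothesis there is a vertex $v_0$ without a collinear opposite pair, and I take one dihedral angle at $v_0$ as the parameter $t$. By the local step all four dihedral angles at $v_0$ are functions of $t$. I then propagate along the grid lines, since each inner edge of the net joins two consecutive vertices of a row or a column polyline. Take a vertex $v$ adjacent to an already treated vertex along such a line. One dihedral angle of $v$ is then known as a function of $t$, and:
\begin{itemize}
\item if $v$ is nondegenerate, the local step determines all its other angles;
\item if $v$ is degenerate, the collinear pair passes the known angle straight on, while the transversal angles are constant.
\end{itemize}
In both cases the dihedral angles at all edges through $v$ are determined. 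I would argue inductively along rows and columns, using that the dual graph of the grid is connected, to conclude that every dihedral angle of the net is locally a function of $t$. This gives at most a 1-parameter flex.

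\textbf{Main obstacle.} The delicate step is the propagation through degenerate vertices. When the known angle at a degenerate vertex sits on the transversal (constant) edge, it carries no information about the folding angle along the collinear line. I expect to handle this by following the collinear line instead. Along a straight polyline of collinear edges all folding angles are equal, and this line must meet a row or column that already carries $t$-information, either at a nondegenerate vertex or at its endpoint on the boundary structure. I would first try to show that each such straight line can be reached from $v_0$ by alternating row/column steps. If this fails in some configuration of degenerate vertices, I would isolate that configuration and verify directly that its additional freedom is excluded by the rigidity of the planar quads that share edges with $v_0$'s row and column.
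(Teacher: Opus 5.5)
\textbf{There is a genuine gap, and it cannot be closed.} The paper gives no proof of this statement; it only quotes \cite{BHS08}. The gap is exactly the step you flag as the main obstacle, propagation through degenerate vertices. With the hypothesis read literally (a single vertex $v_0$ without a collinear opposite pair), the statement is false.

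\textbf{Counterexample.} Take a $(4\times 2)$ net with vertices $X_{i,j}$, $0\le i\le 4$, $0\le j\le 2$:
\begin{itemize}
\item $X_{i,j}=(i,j,0)$ for $i\le 2$;
\item $X_{3,0}=(3,0,0)$, $X_{3,1}=(\tfrac{7}{2},1,0)$, $X_{3,2}=(3,2,0)$;
\item $X_{4,j}=X_{3,j}+(1,0,1)$.
\end{itemize}
All faces are strictly convex planar quads. The grid lines $\ell_1\ni X_{1,0},X_{1,1},X_{1,2}$ and $\ell_2\ni X_{2,0},X_{2,1},X_{2,2}$ are straight. The interior vertex $X_{3,1}$ has no collinear opposite pair.

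Keep the second column fixed. Rotate the first column about $\ell_1$ and, independently, the last two columns rigidly about $\ell_2$. Every face and every hinge is preserved, so this is a $2$-parameter flex. During it, all dihedral angles at $X_{3,1}$ stay constant.

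\textbf{Why your propagation fails here.} The only edge linking $X_{3,1}$ to $\ell_2$ is $X_{2,1}X_{3,1}$, and its dihedral angle is identically flat. The lines $\ell_1,\ell_2$ run from boundary to boundary. So following the collinear lines never picks up information on $t$. The quads around $v_0$ do not help either, since here they move rigidly. Straight interior grid lines act as free hinges, and no propagation argument can exclude them.

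\textbf{A smaller inaccuracy.} At a degenerate vertex, the transversal angles stay flat only locally at non-flat configurations. At a flat state the spherical four-bar can pass to its other branch, so your claim that they are constant ``during any flex'' is too strong.

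\textbf{What your argument does prove.} It proves the version with the stronger hypothesis: the net has interior vertices, and none of them has a pair of collinear opposite edges. Then:
\begin{itemize}
\item your local step applies at every interior vertex;
\item the interior vertices are connected through inner edges, and every inner edge has an interior endpoint;
\item each vertex relation is finite-to-one, because the diagonal stays in $(0;\pi)$ near the given configuration.
\end{itemize}
Hence the configuration space maps finite-to-one to the $t$-axis and has dimension at most $1$. Branch points do not affect this count.

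This stronger form is what the paper needs in the proof of Theorem~\ref{thm:equi}. A collinear pair $e_2,e_4$ at a vertex makes the two faces sharing $e_1$ coplanar, and that is excluded there by the no-coplanar-adjacent-faces assumption. In the proof of Theorem~\ref{thm:every}, the weak form is invoked only for translational nets with neither $\Vkt t$ nor $\Vkt g$ straight. There a straight interior grid line cannot occur, so the mechanism of the counterexample does not arise.
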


\begin{lemma}\label{lem:spezi}
    A translational net, where every vertex contains at least a pair of collinear opposite edges, has to have either the generatrix or the trajectory or both as straight lines. 
\end{lemma}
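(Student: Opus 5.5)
We set up the net with the edge vectors $\Vkt a_i:=\Vkt t_i-\Vkt t_{i-1}$ for $i=1,\ldots,m$ and $\Vkt b_j:=\Vkt g_j-\Vkt g_{j-1}$ for $j=1,\ldots,n$, all nonzero. The vertices of the translational net are $\Vkt v_{i,j}=\Vkt t_i+\Vkt g_j$. At $\Vkt v_{i,j}$ the two edges in trajectory direction have directions $\Vkt a_i$ and $\Vkt a_{i+1}$ (where they exist), and the two edges in generatrix direction have directions $\Vkt b_j$ and $\Vkt b_{j+1}$. The key point is that the edges incident to a vertex depend only on the index pair, and in a separable way.

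Next we identify what "a pair of collinear opposite edges" means. At an inner vertex $\Vkt v_{i,j}$ with $1\le i\le m-1$ and $1\le j\le n-1$, the opposite pairs are $(\Vkt a_i,\Vkt a_{i+1})$ and $(\Vkt b_j,\Vkt b_{j+1})$, so the condition reads: $\Vkt a_i\parallel\Vkt a_{i+1}$ or $\Vkt b_j\parallel\Vkt b_{j+1}$. Since the hypothesis of Theorem \ref{thm:spezi} concerns the collinearity of opposite edges at vertices, we interpret it for the vertices where opposite edges exist, namely the inner vertices. If $m=1$ or $n=1$, one polyline is a single segment and the claim holds trivially.

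The main step is a combinatorial argument on the index grid. Define $I=\{i\in\{1,\ldots,m-1\}:\Vkt a_i\not\parallel\Vkt a_{i+1}\}$, the set of genuine kinks of $\Vkt t$, and define $J\subseteq\{1,\ldots,n-1\}$ analogously for $\Vkt g$. The hypothesis says that for every pair $(i,j)$ we have $i\notin I$ or $j\notin J$. Hence $I\times J=\emptyset$, so $I=\emptyset$ or $J=\emptyset$. If $I=\emptyset$, all consecutive segments of $\Vkt t$ are parallel, so all $\Vkt a_i$ are parallel and $\Vkt t$ lies on a straight line. Backtracking is allowed here, in line with the negative lengths admitted by the parallelism operation. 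Likewise, $J=\emptyset$ makes $\Vkt g$ straight, and both may happen.

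The main obstacle is the boundary bookkeeping. One must make sure the statement is read for inner vertices, since boundary vertices of valence two or three have at most one opposite pair. One must also justify that "collinear opposite edges" means parallel edge vectors, which follows because at a vertex both edges pass through that vertex. Given these points, the disjunction $I=\emptyset$ or $J=\emptyset$ is immediate.
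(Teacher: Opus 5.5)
Your proof is correct and is essentially the paper's argument. The paper argues indirectly: it picks a kink $\Vkt t_i$ of the trajectory and a kink $\Vkt g_j$ of the generatrix and notes that the net vertex indexed by $(i,j)$ then has no pair of collinear opposite edges, which is exactly your observation that the hypothesis forces $I\times J=\emptyset$.
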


\begin{proof}
    The proof is done in an indirect way. Assume both are not straight lines; 
    i.e.\  there exist vertices $\Vkt t_i$ of the trajectory and $\Vkt g_j$ of the generatrix where the two adjacent line-segments are not collinear. By construction the vertex $X_{i,j}$ of the translational net does not have at least one pair of collinear opposite edges. \hfill $\BewEnde$
\end{proof}

\noindent
{\bf Proof of Theorem \ref{thm:equi}:} 
Due to Lemma \ref{lem:spezi} and the assumption that the translational net has no pair of coplanar adjacent faces, we can conclude from Theorem \ref{thm:spezi}, that the flex is 1-dimensional. Therefore this also holds for the flex of the spherical image, which is constructed next.

To do so, we consider the translational $(m,n)$ net with 
trajectory polyline $\Vkt t=(\Vkt t_0,\ldots ,\Vkt t_m)$ and 
generatrix polyline  $\Vkt g=(\Vkt g_0,\ldots ,\Vkt g_n)$.  
By applying the construction of the spherical image each line-segment bounded by the $(i-1)$-th and $i$-th vertex is mapped to one of the points:
\begin{equation}
\begin{split}
    &\pm \Vkt p_i:=\pm \tfrac{\Vkt t_i-\Vkt t_{i-1}}{\| \Vkt t_i-\Vkt t_{i-1} \|}\quad 
    \text{for}\quad i=1,\ldots, m \\
    &\pm \Vkt q_i:=\pm \tfrac{\Vkt g_i-\Vkt g_{i-1}}{\| \Vkt g_i-\Vkt g_{i-1} \|}\quad 
    \text{for}\quad i=1,\ldots, n
\end{split}
\end{equation}
In this way we get the two sets of spherical points $\mathcal{P}=(\Vkt p_1,\ldots,\Vkt p_m)$ and  $\mathcal{Q}=(\Vkt q_1,\ldots,\Vkt q_n)$. The complete bipartite combinatorics follows from the construction of the translational surfaces; i.e.\ every line-segment of $\Vkt t$ generates with every line-segment of $\Vkt g$ one face of the $(m,n)$ net, which is mapped to the corresponding great circular arc on the sphere. 
Moreover, as each face has to be a proper parallelogram (cf.\ first paragraph of Section \ref{sec:intro}), the resulting complete bipartite spherical $(m,n)$ framework is nondegenerate. 

We remain to show the existence of a nonoverlapping configuration. 
In general if a nondegenerate complete bipartite $(m,n)$ framework on the sphere is flexible then there are only finitely many configurations  where the $i$-th point and the $j$-th one (with $i\neq j$) of the same set coincide or are antipodal.  
As a consequence the resulting nondegenerate complete bipartite $(m,n)$ spherical framework has to have infinitely many nonoverlapping configurations and we are done. 
Therefore we remain with the discussion of nondegenerate complete bipartite $(m,n)$ frameworks which have infinitely many configurations where the $i$-th point and the $j$-th one (with $i\neq j$) of the same set are identical or antipodal. We have to distinguish two cases:

\begin{enumerate}[(i)]
    \item  The $i$-th point and the $j$-th one (with $i\neq j$)  of the same set are identical or antipodal during the whole flex. In this case the two points can be identified an we reduce the spherical framework to be a complete bipartite one of type $(m-1,n)$ or $(m,n-1)$, respectively. We can repeat this reduction operation until we remain with a  spherical complete bipartite $(k,l)$ framework with $0<k\leq m$, $0<l\leq n$ and $(k,l)\neq (1,1)$, which has no two points of the same set identical or antipodal during the whole flex. If the reduced spherical framework has a nonoverlapping configuration, we are done. If such a configuration does not exist we proceed with item (ii).
    \item
    Due to item (i) we can assume that the spherical complete bipartite $(k,l)$ framework with $0<k\leq m$, $0<l\leq n$ and $(k,l)\neq (1,1)$, does not have two points of the same set, which are identical or antipodal during the whole flex,  but it still does not possess a nonoverlapping  configuration.
    Theoretically this is possible as a 1-dimensional flex can have different branches. For example let us consider a translational T-hedron with a zigzag profile polyline, where every second line-segment is parallel to the horizontal base plane containing an arbitrary trajectory polyline.  Each of the respective horizontal strips is in a bifurcation configuration and can flex up or down, which corresponds to different branches of the 1-dimensional flex, also known as motion modes. 
    
    It is well known, that a spherical 4-bar mechanism can be assembled in two ways, which correspond to two motion modes. The change from one mode into the other can only be done if two neighbouring bars are on a great circle. But this would correspond to the case that two adjacent faces of the underlying translational net are coplanar, which is not possible due to our assumption. 
    Therefore the spherical complete bipartite $(k,l)$ framework is not in a bifurcation configuration. Therefore  we can apply the reduction operation of item (i) to  two points of the same set, which are identical or antipodal during the motion mode the configuration belongs to. By an iterative reduction we end up with a spherical complete bipartite $(e,f)$ framework with $0<e\leq k$, $0<f\leq l$ and $(e,f)\neq (1,1)$, which can flex along the branch into a nonoverlapping configuration, as there are only finitely many configurations left where two points of the same set are identical or antipodal. \hfill $\BewEnde$
\end{enumerate}

Due to Theorem \ref{thm:equi} the classification of flexible translational nets can be reduced to the one of flexible complete bipartite frameworks on the sphere. This problem was solved by Kovalev and Orevkov in \cite{kovalev} in 2023. Based on their result the classification can be formulated as follows:

\begin{theorem}\label{them:list}
A nondegenerate nonoverlapping bipartite $(e,f)$ framework on the sphere with $0<e,f\in\NN$ and $(e,f)\neq (1,1)$ is flexible if and only if it is one of the following cases: 
\begin{enumerate}
\item [$F_0$:]
{\bf Trivial flexible framework:} $e<3$ and/or $f<3$.  
\item [$F_1$:]
{\bf 1st spherical Dixon framework:} $\Vkt p_1,\ldots,\Vkt p_e$ are coplanar and
$\Vkt q_1,\ldots,\Vkt q_f$ are coplanar and the two carrier planes are orthogonal. 
\item [$F_2$:]
{\bf 2nd spherical Dixon framework:} There are two orthogonal planes $\alpha$ and $\beta$ passing through the center of the sphere and there are two rectangles $P$ and $Q$, which are symmetric with respect to $\alpha$ and $\beta$ and have all their vertices on the sphere but none of them lies in $\alpha$ or $\beta$. 
We get a spherical Dixon framework of the second type if $\mathcal{P}\in P$ and $\mathcal{Q}\in Q$ holds after applying a possibly necessary 
exchange of points $\Vkt p_i$ and $\Vkt q_j$ by their antipodes. This holds for $2<e,f<5$.
\item [$F_3$:]
{\bf Constant diagonal angle framework:} After a possibly necessary renumbering of vertices, the following algebraic relations  have to hold for $e=f=3$, which are formulated using the standard scalar product $\langle \cdot, \cdot\rangle$:
\begin{equation}
\begin{split}
    &\langle \Vkt p_2,\Vkt q_2\rangle =  \langle \Vkt p_3,\Vkt q_2\rangle =  \langle \Vkt p_3,\Vkt q_3\rangle =  -\langle \Vkt p_2,\Vkt q_3\rangle  \\
    &\langle \Vkt p_1,\Vkt q_1\rangle\neq 0\quad\text{and}\quad
    \langle \Vkt p_1,\Vkt q_k\rangle = \langle \Vkt p_k,\Vkt q_1\rangle =0 \quad \text{for}\quad k=2,3.
\end{split}
\end{equation}
\end{enumerate}
\end{theorem}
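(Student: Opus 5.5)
The plan is to derive Theorem \ref{them:list} from the classification of flexible complete bipartite frameworks on the sphere by Kovalev and Orevkov \cite{kovalev}. Their theorem is stated for nondegenerate nonoverlapping configurations of $K_{e,f}$ on $S^2$. My approach has three parts: translate their list into the notation used here, verify sufficiency of each case, and invoke their result for necessity.

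\emph{Sufficiency.} I will treat the four cases one at a time.
\begin{itemize}
\item $F_0$: if $e\le 2$, the framework is a chain of spherical 4-bar linkages $\Vkt p_1\Vkt q_j\Vkt p_2\Vkt q_{j+1}$ sharing the bar $\Vkt p_1\Vkt p_2$ as an implicit diagonal. Rotating $\Vkt p_2$ about $\Vkt p_1$ keeps every $\Vkt q_j$ on two fixed circles, so the framework moves with one degree of freedom for $e=2$. For $e=1$ it has even more freedom.
\item $F_1$: put $\mathcal P$ on the great circle $z=0$ and $\mathcal Q$ on $x=0$. Then $\langle \Vkt p_i,\Vkt q_j\rangle = p_{i,x}q_{j,x}+p_{i,y}q_{j,y}$ for the first set and $q_{j,y}p_{i,y}+q_{j,z}p_{i,z}$ for the second, so the inner product reduces to $p_{i,y}q_{j,y}$. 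The deformation that applies the affine scalings $(x,y)\mapsto(\lambda x,\mu y)$ on one circle and the dual scaling on the other, followed by renormalisation, is the classical Dixon motion. Equivalently, the relation $\cos\varphi_i\cos\psi_j=c_{ij}$ survives a one-parameter deformation of the angles. I expect to write down this one-parameter family explicitly.
\item $F_2$: I would show that the rectangles $P$ and $Q$, symmetric with respect to $\alpha$ and $\beta$, admit the symmetric motion in which both rectangles deform and stay symmetric. By symmetry, all the bar lengths reduce to at most two independent equations in two parameters. This yields a one-dimensional flex.
\item $F_3$: the orthogonality conditions force $\Vkt p_1\perp\operatorname{span}(\Vkt q_2,\Vkt q_3)$ and $\Vkt q_1\perp\operatorname{span}(\Vkt p_2,\Vkt p_3)$. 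The remaining four equal (up to sign) inner products express that the diagonal angle between the two pairs is constant. I would parametrise by the angle between the planes and verify directly that one free parameter remains.
\end{itemize}

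\emph{Necessity.} This is entirely the content of \cite{kovalev}. The step needing care is matching hypotheses. Kovalev and Orevkov work projectively, with points up to antipodes, which is exactly our convention that $\pm\Vkt p_i$ are interchangeable; this explains the clause ``exchange of points by their antipodes'' in $F_2$. Their nondegeneracy assumptions must also be checked against ours, namely $\Vkt p_i\ne\pm\Vkt q_j$, $\Vkt p_i\ne\pm\Vkt p_j$ and $\Vkt q_i\ne\pm\Vkt q_j$. I would also verify the size restrictions. Case $F_2$ only exists for $e,f\le4$, since a rectangle has four vertices up to antipodes. Case $F_3$ requires $e=f=3$, because adding a further point to either side would overdetermine the system; this is stated in their classification.

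The main obstacle I anticipate is the bookkeeping in this translation: their families may be parametrised differently, for instance via angle or cross-ratio conditions. I would handle it by explicitly rewriting each of their conditions in terms of scalar products $\langle\Vkt p_i,\Vkt q_j\rangle$, and by checking that no case of theirs is excluded or duplicated under our nonoverlapping assumption.
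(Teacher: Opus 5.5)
Your reliance on Kovalev--Orevkov \cite{kovalev} matches the paper's proof. The paper gives no argument of its own: it cites \cite{kovalev} for the classification, Wunderlich \cite{wunderlich} for the existence of $F_1,F_2$, and \cite{gallet} for $F_3$. Its only addition is the condition $\langle\Vkt p_1,\Vkt q_1\rangle\neq0$, which keeps $F_3$ from containing a special case of $F_1$. So your proposal stands on the citation. The direct sufficiency checks you add, however, would fail as written for $F_2$ and $F_3$.

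\emph{Case $F_2$.} Take $\alpha\colon x=0$ and $\beta\colon y=0$. A rectangle symmetric in both planes with no vertex on them is a single orbit $\{(\pm a,\pm b,c)\}$. Each rectangle therefore carries two parameters, four in total. Every bar condition $\langle(\pm a,\pm b,c),(\pm a',\pm b',c')\rangle=\mathrm{const}$ (antipodal exchanges only flip signs) follows from the three conditions $aa',\,bb',\,cc'=\mathrm{const}$. With $u=aa'$, $v=bb'$, $w=cc'$ fixed, the flex is the curve $a^2+b^2+c^2=1=u^2/a^2+v^2/b^2+w^2/c^2$, i.e.\ $4-3=1$. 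Your count of ``two equations in two parameters'' would instead predict rigidity.

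\emph{Case $F_3$.} The angle between the planes spanned by $\Vkt p_2,\Vkt p_3$ and by $\Vkt q_2,\Vkt q_3$ equals the angle between their poles $\Vkt q_1,\Vkt p_1$. It is therefore fixed by the bar $\Vkt p_1\Vkt q_1$ and cannot serve as the flex parameter. Nor do the four equal inner products by themselves express that this angle is constant. They only define the spherical four-bar $\Vkt p_2\Vkt q_2\Vkt p_3\Vkt q_3$, and the constancy of its diagonal angle is exactly the lemma that has to be proved. A working argument runs as follows.
\begin{enumerate}
\item Let the four-bar move, and take $\Vkt p_1,\Vkt q_1$ as the poles of the great circles $\Vkt q_2\Vkt q_3$ and $\Vkt p_2\Vkt p_3$. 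Then the four vanishing bars hold automatically.
\item Put $c=\langle\Vkt p_2,\Vkt q_2\rangle$, $x=\langle\Vkt p_2,\Vkt p_3\rangle$, $y=\langle\Vkt q_2,\Vkt q_3\rangle$. Binet--Cauchy gives $\langle\Vkt q_2\times\Vkt q_3,\Vkt p_2\times\Vkt p_3\rangle=2c^2$.
\item The $4\times4$ Gram determinant of $\Vkt p_2,\Vkt p_3,\Vkt q_2,\Vkt q_3$ vanishes, which yields $(1-x^2)(1-y^2)=4c^2(1-c^2)$.
\item Hence $\langle\Vkt p_1,\Vkt q_1\rangle^2=c^2/(1-c^2)$ stays constant.
\end{enumerate}

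Two smaller points. In $F_0$, ``rotating $\Vkt p_2$ about $\Vkt p_1$'' describes a rigid motion, and there is no bar $\Vkt p_1\Vkt p_2$. The flex parameter is the free distance $d(\Vkt p_1,\Vkt p_2)$. Your claim of one degree of freedom fails when the triangle-inequality ranges imposed by different $\Vkt q_j$ meet in a single value. For example, $\Vkt p_1,\Vkt q_1,\Vkt p_2,\Vkt q_2$ placed on one great circle at arc positions $0,\,0.3,\,0.5,\,0.8$ form a nondegenerate, nonoverlapping, rigid framework. So $F_0$ tacitly needs a non-flat proviso, in your proposal and in the statement alike; in the paper's application this is supplied by excluding coplanar adjacent faces. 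In $F_1$, an affine scaling followed by renormalisation does not scale the relevant coordinates by one common factor. The correct flex is the one in your ``equivalently'': with angles measured from the common line of the two planes, $\cos\varphi_i\mapsto\lambda\cos\varphi_i$ and $\cos\psi_j\mapsto\cos\psi_j/\lambda$.
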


Note that the existence of the spherical analogues of both Dixon 9-bar linkages ($F_1$ and $F_2$), where the second one can be extended to Bottema's 16-bar mechanism, was already proven by Wunderlich \cite{wunderlich}. The entry $F_3$ was first listed in \cite{gallet}, where a full classification of flexible complete $(3,3)$ bipartite spherical frameworks was given. Therein the flex of $F_3$ was denoted by {\it constant diagonal angle motion}, as the angle enclosed by the diagonals of the spherical quadrilateral $\Vkt p_2,\Vkt q_2,\Vkt p_3,\Vkt q_3$ remains constant. 
We added to the characterisation of $F_3$ the condition $\langle \Vkt p_1,\Vkt q_1\rangle\neq 0$ because otherwise we end up with a special case of $F_1$.

\section{Construction of all flexible translational nets}\label{sec:construct}

In the following we give a 3-step algorithm for the construction of all translational $(m\times n)$ nets, 
which can flex within the class of discrete surfaces of translation: 

\begin{enumerate}[(1)]
    \item 
    We start with a nondegenerate nonoverlapping bipartite $(e,f)$ framework $F_i$ on the sphere listed in Theorem \ref{them:list}. Moreover, according to the proof of Theorem \ref{thm:equi} we can assume without loss of generality that it is not in a bifurcation configuration. 
    \item 
    Then we compose the trajectory polyline $\Vkt t=(\Vkt t_0,\ldots ,\Vkt t_m)$ in a way that every line-segment is parallel to a vector $\pm \Vkt p_i$. 
    The same holds for the generatrix polyline  $\Vkt g=(\Vkt g_0,\ldots ,\Vkt g_n)$ with respect to $\pm \Vkt q_j$. 
    \item 
    By translating one polyline along the other, we obtain the flexible  translational net $N_i$.
\end{enumerate}

Starting in step (1) with $F_0$ we obtain the trivially flexible translational nets $N_0$; for $F_1$ we end up with translational T-hedra $N_1$. The inputs  $F_2$ and $F_3$ result in novel flexible translational nets, which we call {\it  Bottema's spherical 16-bar based flexible translational nets} $N_2$ and {\it constant diagonal angle flexible translational nets} $N_3$, respectively.

If we add to step (2) the condition that no three consecutive points of the polylines $\Vkt t$ or $\Vkt g$ are allowed to be collinear, then the 
configuration constructed in step (3) has no coplanar adjacent faces. 
But can all flexible translational nets be constructed with the above given 3-step algorithm or have we missed some due to the assumption of Theorem \ref{thm:equi} that no adjacent faces are coplanar? The following theorem will show that the given construction is indeed exhaustive.

\begin{theorem}\label{thm:every}
    Every  translational $(m\times n)$ net, which can flex within the class of discrete surfaces of translation, can be constructed by the given 3-step algorithm.
\end{theorem}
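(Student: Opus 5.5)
The plan is to remove the hypothesis of Theorem \ref{thm:equi} that no two adjacent faces are coplanar. I would reduce an arbitrary flexible translational net to one satisfying this hypothesis, apply Theorem \ref{thm:equi} and Theorem \ref{them:list} to the reduced net, and then read the original net back as an output of the 3-step algorithm.

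Let the trajectory $\Vkt t$ and generatrix $\Vkt g$ span a translational $(m\times n)$ net that flexes within the class of discrete surfaces of translation. Two adjacent faces of such a net are parallelograms sharing one segment of one polyline, say a segment of $\Vkt g$. Their other sides are the consecutive segments $\Vkt t_{i}-\Vkt t_{i-1}$ and $\Vkt t_{i+1}-\Vkt t_i$. Since the faces are proper, they are coplanar exactly when these two segments and the shared $\Vkt g$-segment lie in a common plane. Each segment keeps its length along the flex, so each spherical image point $\Vkt p_i$ or $\Vkt q_j$ is a well-defined function of the flex parameter.

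\emph{Step 1: coplanarity along the whole flex.} Suppose the two adjacent faces stay coplanar during the whole flex. I would first show that the two consecutive $\Vkt t$-segments are then parallel or antiparallel throughout. Coplanarity means $\Vkt p_i,\Vkt p_{i+1},\Vkt q_j$ lie on one great circle. The angles $\angle(\Vkt p_i,\Vkt q_j)$ and $\angle(\Vkt p_{i+1},\Vkt q_j)$ are fixed, so the angle between $\Vkt p_i$ and $\Vkt p_{i+1}$ is then the sum or the difference of two constants. Hence the relative position of $\Vkt p_i,\Vkt p_{i+1}$ is rigid, and I would glue them into one rigid body. If $\Vkt p_i\neq\pm\Vkt p_{i+1}$, the two $\Vkt t$-segments span a rigid planar angle. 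That angle is hinged to every $\Vkt g$-segment $\Vkt q_k$ with fixed angles to both legs, which pins each $\Vkt q_k$ to at most two positions relative to the angle. So all of $\mathcal{Q}$ moves rigidly with $\Vkt p_i,\Vkt p_{i+1}$, and by the same argument applied to the remaining $\Vkt p_l$ (for $n\geq2$ with non-collinear $\Vkt q$'s; otherwise the net is trivial) the whole image is rigid. This contradicts flexibility up to trivial subcases, which I expect already to fall under $F_0$. Hence $\Vkt p_i=\pm\Vkt p_{i+1}$ along the flex: the segments are collinear and can be merged, which is exactly the reduction (i) in the proof of Theorem \ref{thm:equi}.

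\emph{Step 2: coplanarity at isolated configurations.} If the coplanarity holds only at isolated parameter values, I would move along the flex to a nearby generic configuration with no coplanar adjacent faces. Theorem \ref{thm:equi} applies there and places the image in one of the classes $F_0,\dots,F_3$ of Theorem \ref{them:list}. The coplanar configuration is then just one position in the motion of this same framework, so the net at that moment is still produced by step (2) of the algorithm.

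\emph{Step 3: reassembly.} After merging collinear consecutive segments as in Step 1, I obtain a nonoverlapping nondegenerate $(e,f)$ framework from Theorem \ref{them:list}. The original polylines have every segment parallel to some $\pm\Vkt p_i$ or $\pm\Vkt q_j$, possibly with repetitions and sign changes, which is precisely what step (2) allows. Translating one polyline along the other (step (3)) gives back the net.

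The main obstacle is Step 1: the rigidity argument must carefully rule out every permanent-coplanarity configuration in which the two segments are not collinear. The concrete example is the zigzag T-hedron mentioned in the proof of Theorem \ref{thm:equi}, which sits in a bifurcation, and there one must check that the chosen branch is still realised by some class $F_i$. I would settle this by a direct case analysis on the possible sizes $(e,f)$ of the rigid cluster that forms.
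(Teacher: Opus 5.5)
Your proposal is essentially the paper's proof. Step 1 is its case (I): permanent coplanarity with non-collinear segments rigidifies the net (I)(i), and collinear segments allow the two strips to be merged (I)(ii). Step 2 is its opening observation: if every pair of adjacent faces is coplanar only finitely often, the net can be flexed to a configuration where Theorem \ref{thm:equi} and Theorem \ref{them:list} apply. Your spherical rigidity argument is a more explicit version of the paper's remark that rigidity ``propagates along the complete $(2\times n)$-strip''. In it, fixed distances to $\Vkt p_i\neq\pm\Vkt p_{i+1}$ leave each $\Vkt q_k$ at most two positions, and complete bipartiteness carries this back to all of $\mathcal{P}$. The trivial subcases you set aside are the nets with a straight polyline, which the paper removes at the start via Lemma \ref{lem:spezi}.

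The one loose end, which you also misdiagnose, is branching. Your dichotomy ``coplanar during the whole flex'' versus ``coplanar at isolated parameter values'' is exhaustive only on a single branch (irreducible analytic arc) of the configuration curve. Globally, a pair of faces can stay coplanar along one motion mode but not another, e.g.\ a $(2\times 2)$ net leaving its flat state with both pairs of opposite edges collinear. This is the paper's case (II), and it is closed by restricting to one branch. There the coplanarity condition either holds identically, and your Step 1 argument (which is local) applies verbatim and forces $\Vkt p_i=\pm\Vkt p_{i+1}$ on that branch. Or it holds at finitely many points, and Step 2 applies. No case analysis on the sizes of a rigid cluster is needed.

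In particular, the zigzag T-hedron is not an instance of permanent coplanarity with non-collinear segments: by your own Step 1 argument, no branch can keep a horizontal strip flat. Its flat strips occur only at the bifurcation configuration, so it falls under Step 2. The branch-dependent overlap of its parallel horizontal directions is handled inside Theorem \ref{thm:equi} (item (ii) of its proof).

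Similarly, in Step 3, merging consecutive collinear segments does not by itself make the framework nonoverlapping, since non-consecutive segments may still be parallel. Nonoverlapping comes from the reduction built into Theorem \ref{thm:equi}, which you already invoke in Step 2.
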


\noindent
{\bf Proof of Theorem \ref{thm:every}:} 
Clearly, every translational net generated by the above 3-step algorithm is flexible within the class of discrete surfaces of translation due to its construction. 

We only have to show that also all translational $(m\times n)$ nets, which does not possess a configuration without two adjacent faces coplanar during the flex within the class of discrete surfaces of translation, can be produced in this way. For translational nets, where every vertex contains at least a pair of collinear opposite edges, we know due to Lemma \ref{lem:spezi}, that they can be constructed by our algorithm as they fall into the set $N_0$. Therefore we can assume for the remainder of the proof, that not every vertex of the translational net contains at least a pair of collinear opposite edges. According to Theorem  \ref{thm:spezi} the corresponding flex can only be 1-dimensional. 

In general there are only a finite number of configurations where two adjacent faces of a $(2\times 2)$ net are coplanar. If this holds true for every  $(2\times 2)$ subnet of the $(m\times n)$ net, then there exist infinitely many configurations during the flex where no two adjacent faces are coplanar. 
Therefore we remain with the discussion of the case where a translational $(2\times 2)$ net has infinitely many configurations where two adjacent faces $A_0$ and $B_0$ are coplanar. 
 As the roles of the generatrix and trajectory are interchangeable, we can assume without loss of generality that $A_0$ and $B_0$ are neighbouring along the generatrix. The other two faces of the $(2\times 2)$ net are denoted by $A_1$ and $B_1$, the hinge between $A_0$ and $A_1$ by $a_0$ and the one between $B_0$ and $B_1$ by $b_0$, respectively. We distinguish the following two cases: 

\begin{enumerate}[(I)]
\item
 Let us assume that $A_0$ and $B_0$ are coplanar during the whole 1-dimensional flex. Therefore we can assume that they are rigidly aligned. Two distinct cases must be considered:
\begin{enumerate}[(i)] 
        \item 
        $a_0$ and $b_0$ are not collinear: The rigidity of the alignment of the faces $A_0$ and $B_0$ propagates along the complete translational $(2\times n)$-strip. As the trajectory cannot be a straight line, this strip would rigidify the complete translational net, which is a contradiction to our assumption. Therefore case (i) is not possible. 
        \item 
        $a_0$ and $b_0$ are collinear: 
        As a consequence the union of $A_0$ and $B_0$ form a parallelogram $C_0$. Thus the complete translational $(2\times n)$-strip can be replaced by a translational $(1\times n)$-strip. By a repeated application of this replacement operations, we can eliminate all cases where adjacent faces are coplanar during the whole flex\footnote{\label{fn:collinear} This replacement operation corresponds to the elimination of collinear vertices in the input polylines.} and we end up with a translational $(k\times l)$ net with $0<k\leq m$, $0<l\leq n$ and $(k,l)\neq (1,1)$. If this net has a configuration during the flex without two adjacent faces coplanar we are done, as we can also construct it with the above algorithm, but where some vertices of the input polylines are collinear (cf.\ Footnote \ref{fn:collinear}). If such a configuration does not exist we proceed with item (II).
    \end{enumerate}
\item 
Due to item (Iii) we can assume that the translational $(k\times l)$ net with $0<k\leq m$, $0<l\leq n$ and $(k,l)\neq (1,1)$ does not have adjacent faces, which are coplanar during the whole flex, but it still does not possess a configuration without two adjacent faces coplanar.  
Theoretically this is possible as a 1-dimensional flex can have different branches. The simplest example for this is a $(2\times 2)$ net, where both pairs of opposite edges are collinear in the flat configuration, which is also the bifurcation configuration. 

It is well known that a bifurcation into different motion modes, where in one of them two faces remain coplanar, can only happen in a flat state of a $(2\times 2)$ net, which has in this state at least on pair of opposite edges collinear. As every of these $(2\times 2)$ nets can only have finitely many of these states, the $(k\times l)$ net, has to have infinitely many configurations during the flex where no $(2\times 2)$ net is in a bifurcation configuration. 
Without loss of generality we can assume that the translational $(k\times l)$ net is in such a configuration. 
Let us assume that the adjacent faces $A_0$ and $B_0$ are coplanar along the complete branch (motion mode) the considered configuration  belongs to. We can now follow the same line of reasoning as done in item (I) with the sole difference that the flex is restricted to a branch. In this way we will end up with a translational $(e\times f)$ net with $0<e\leq k$, $0<f\leq l$ and $(e,f)\neq (1,1)$, which can flex along the branch into a configuration with no adjacent faces coplanar, as there are only finitely many configurations left where two adjacent faces are coplanar.  This configuration can be reconstructed with the above algorithm, but again some vertices of the input polylines are collinear (cf.\ Footnote \ref{fn:collinear}). $\phm$ \hfill $\BewEnde$
\end{enumerate}

\begin{remark} 
The given 3-step algorithm and Theorem \ref{thm:every} show that only $N_1$ can have smooth counterparts and that only $N_0$ and $N_1$ can have semi-discrete counterparts, as solely in these classes both polylines or one of them can be smoothed. 
\hfill $\diamond$
\end{remark}

\subsection{Flexible translational tubes with quadrilateral cross-sections}
\label{sec:tube}

We consider a flexible translational $(m\times 4)$ net, where the generatrix polyline is closed to a quadrilateral polygon. For an arbitrary choice of the generatrix this closing constraint will rigidify the resulting translational tube.  
In the following we describe the set $T_i$ of flexible translational tubes with a quadrilateral cross-section which can be constructed within $N_i$: 
\begin{enumerate}
\item [$T_0$:] 
Within $N_0$ one can construct the trivial set $T_0$ of flexible translational tubes by choosing a parallelogramic profile (see e.g.\ \cite{tachi_parallel}). 
\item [$T_1$:]
Within $N_1$ one can construct the set $T_1$ of flexible translational T-tubes  according to \cite{tubes}, where the problem is discussed in detail. 
\item [$T_3$:]
Within $N_3$ we can only obtain flexible translational tubes which are already contained in $T_0$. The reason for this is as follows: In $F_3$ we have only three directions for generating the closed quadrilateral generatrix. Using all three directions would result in a trapezoidal shape, which is always planar. This implies two contradictions: (i) One can argue that the three spherical points of a set are not located on a great circular arc during the flex of $F_3$, which contradicts the planarity condition. (ii) It can also easily be seen that during the flex of a prismatic tube with trapezoidal cross-section in the given configuration, the parallel faces can only remain parallel for a special trapezoid, namely a parallelogram. 
\item [$T_2$:]
The set $T_2$ of flexible translational tubes constructed within $N_2$ does not only consist of tubes, which are already contained in $T_0$,  but also of novel ones described next: 
For the same reasons as in the last item we cannot construct tubes using three different directions for the quadrilateral generatrix. 
But $F_2$ has four directions $\Vkt q_1,\ldots,\Vkt q_4$, which are symmetric with respect to the orthogonal planes $\alpha$ and $\beta$. Thus we can construct the generatrix as a skew rhombus, which can be done in different ways, where one can distinguish the following two cases: 
    \begin{enumerate}[$\star$]
        \item The skew rhombus has two symmetry planes which are parallel to $\alpha$ and $\beta$.
        \item The skew rhombus has two symmetry planes where on is either parallel to $\alpha$ or $\beta$ and the other one is orthogonal to  $\alpha$ and $\beta$.
    \end{enumerate}
 As the symmetry properties of $F_2$ are preserved during its flex, the skew rhombus remains closed during the flex of the translational tube. 
\end{enumerate}

\begin{remark}
Note that flexible translational tubes with a quadrilateral cross-section can only arise from the composition of flexible bi-prisms with a quadrilateral equator classified by the author \cite{naw_prism}. The flexible bi-prisms composing the novel tubes of the set $T_2$ are either of Type I or Type IIi according to the classification given in \cite[Theorem 6]{naw_prism}. \hfill $\diamond$
\end{remark}

\subsection{Flexible translational polyhedral doubly quadrilateral toroids}\label{sec:toroid}

In a recent joint-work of the author \cite{new_woven} flexible polyhedral doubly quadrilateral (PDQ) toroids are studied, that are, polyhedral toroids with both a quadrilateral cross-section and a quadrilateral trajectory.
Flexible translational PDQ-toroids can be obtained by closing also the trajectory of the tubes discussed in Section \ref{sec:tube} to a quadrilateral polygon. Again for an arbitrary choice of the trajectory this closing constraint will rigidify the resulting PDQ-toroid.  
In the following we describe the set $Q_i$ of flexible translational PDQ-toroids which can be constructed from $T_i$:
\begin{enumerate}
\item [$Q_0$:] A trajectory of a tube from $T_0$ can only remain closed under the flex for the same reason as the generatrix does; thus both polylines have to be parallelograms. The resulting flexible PDQ-toroid is also discussed from other perspectives in \cite{new_woven}. 
\item [$Q_1$:]
Within the set $T_1$ we can construct flexible translational T-toroids, which were studied in detail in \cite{kiumars2}. 
\item [$Q_3$:]
Clearly from $T_3$ we can only obtain flexible PDQ-toroids which are already contained in $Q_0$.
\item [$Q_2$:]
The set $Q_2$ of flexible translational PDQ-toroids constructed within $T_2$ does not only consist of toroids, which are already contained in $Q_0$,  but also of two other types:
    \begin{enumerate} [$\star$]
        \item If the trajectory is a parallelogram then we end up with flexible translational PDQ-toroids, which can also be generated with the approach discussed in \cite{new_woven}. 
        \item 
        The trajectory can also be constructed as a skew rhombus, in the same ways already discussed for  $T_2$. If we combine any of the possible skew-rhombic trajectories with any of the possible skew-rhombic generatrices, then we always end up with a novel flexible translational PDQ-toroid. 
    \end{enumerate}
\end{enumerate}

\section{Conclusion and future research}\label{sec:conclusion}

We gave a full list of translational nets  
which flex within the class of discrete surfaces of translation, by reducing the classification problem to the one of flexible complete bipartite frameworks on the sphere (cf.\ Section \ref{sec:equi}), for which the solution is known. 
The presented list contains the {\it trivially flexible translational nets} $N_0$;  {\it translational T-hedral nets} $N_1$, {\it Bottema's spherical 16-bar based flexible translational nets} $N_2$ and {\it constant diagonal angle flexible translational nets} $N_3$, where the last two classes are novel. 
We also gave a 3-step algorithm for the construction of these translational $(m\times n)$ nets in Section \ref{sec:construct}.  
Moreover, the class $N_2$ implies also novel flexible translational tubes (see $T_2$ of Section \ref{sec:tube}) and novel flexible translational PDQ-toroids (see $Q_2$ of Section \ref{sec:toroid}). Their applicability for the design of flip-flop structures and metamaterials is dedicated to future research. It would also be of interest to identify the special cases of Izmestiev's classification \cite{Izm17} the $(3\times 3)$ subnets of $N_2$ and $N_3$ belong to.

\begin{acknowledgement}
This research was funded in whole or in part by the Austrian Science Fund (FWF) [grant DOI 10.55776/F77]. For open access purposes, the author has applied a CC BY public copyright license to any author accepted manuscript version arising from this submission.
\end{acknowledgement}


\begin{thebibliography}{99.}


\bibitem{Dieleman}	
Dieleman, P., Vasmel, N., Waitukaitis, S., van Hecke, M.: Jigsaw puzzle design of pluripotent origami. Nature Physics \textbf{16}(1):63–68 (2020)

\bibitem{gallet}
Gallet, M., Grasegger, G., Legersky, J., Schicho, J.:
On the existence of paradoxical motions of generically rigid graphs on the sphere. SIAM Journal of Discrete Mathematics \textbf{35}(1):325--361 (2021)

\bibitem{HG18}	
He, Z., Guest, S.D.: On Rigid Origami II: Quadrilateral Creased Papers. 
Proceedings of the Royal Society A \textbf{476}(2237):20200020 (2020)

\bibitem{he26}
He, Z., Hayakawa, K., Ohsaki, M.: Infinitely Refinable Generalization of Quad-Mesh Rigid Origami: From Linear and Equimodular Couplings. 
ASME J. of Mechanisms and Robotics \textbf{18}(3):031004 (2026)

\bibitem{huffman}
Huffman, D.A.: Curvature and creases: a primer on paper. 
IEEE Transactions on Computers \textbf{C-25}:1010--1019 (1976)

\bibitem{Izm17}	
Izmestiev, I.: Classification of flexible Kokotsakis polyhedra with quadrangular base. 
International Mathematics Research Notices \textbf{2017}:715--808 (2017) 

\bibitem{sobhan}
 Izmestiev, I., Nawratil, G., Samareh Rad, S., Sharifmoghaddam, K.: 
 Rigidity and flexibility of discrete conjugate nets with flexible $3\times 3$-subnets. arXiv:2606.15907 (2026)

\bibitem{ivan2}
Izmestiev, I., Raffaelli, M., Rasoulzadeh, A.: 
Smooth, discrete, and semi-discrete Voss surfaces. (in preparation)

\bibitem{ivan1}
Izmestiev, I., Rasoulzadeh, A., Tervooren, J.: 
Isometric Deformations of Discrete and Smooth T-surfaces. 
Computational Geometry \textbf{122}:102104 (2024)

\bibitem{kilian}
Kilian, M., Nawratil, G., Raffaelli, M., Rasoulzadeh, A., Sharifmoghaddam, K.: Interactive design of discrete Voss nets and simulation of their rigid foldings. Computer Aided Geometric Design \textbf{111}:102346 (2024)

\bibitem{kokotsakis}
Kokotsakis, A.: \"Uber bewegliche Polyeder. Mathematische Annalen \textbf{107}:627--647 (1932)

\bibitem{kovalev}
Kovalev, M.D., Orevkov, S.Y.: Complete bipartite graphs flexible in the plane.
Sbornik: Mathematics \textbf{214}(10):1390--1414 (2023)

\bibitem{new_woven}
Mundilova, K., Sharifmoghaddam, K., Nawratil, G.:
Switchable flip-flops from flexible polyhedral doubly quadrilateral toroids. 
(in preparation) 

\bibitem{nassar}
Nassar, H.: Isometric deformations of surfaces of translation. 
Mathematics and Mechanics of Complex Systems \textbf{12}(1):1--17 (2024)

\bibitem{naw_prism}
Nawratil, G.: Flexible octahedra in the projective extension of the {E}uclidean 3-space. Journal for Geometry and Graphics \textbf{14}(2):147--169 (2010)

\bibitem{Naw11}	
Nawratil, G.: Reducible compositions of spherical four-bar linkages with a spherical coupler component. 
Mechanism and Machine Theory \textbf{46}(5):725--742 (2011)

\bibitem{Naw12}	
Nawratil, G.: Reducible compositions of spherical four-bar linkages without a spherical coupler component. 
Mechanism and Machine Theory \textbf{49}:87--103 (2012)

\bibitem{phedra1}
Nawratil, G: From axial C-hedra to general P-nets. Advances in Robot Kinematics, pages 340--347, Springer (2024)

\bibitem{phedra2}
Nawratil, G.: Construction and deformation of P-hedra using control polylines. Advances in Robot Kinematics 2026 - Innovations in Motion: Shaping the Future of Robotic Systems, in press, Springer (2026)

\bibitem{NS10}	
Nawratil, G., Stachel, H.: Composition of spherical four-bar-mechanisms. 
New Trends in Mechanisms Science -- Analysis and Design, pages 99--106, Springer (2010) 

\bibitem{peterson}
Peterson, K:
Ueber {C}urven und {F}l\"achen. 
A. Lang's Buchhandlung/Franz Wagner, Moskau/Leipzig (1868)

\bibitem{sauer}
Sauer, R.: Differenzengeometrie. Springer (1970)

\bibitem{sauer_graf}
Sauer, R., Graf, H.: \"Uber Fl\"achenverbiegung in Analogie zur Verknickung offener Facettenflache. Mathematische Annalen \textbf{105}:499--535 (1931)

\bibitem{BHS08}
Schief, W.K., Bobenko, A.I., Hoffmann, T.: On the integrability of infinitesimal and finite deformations of polyhedral surfaces. 
Discrete Differential Geometry, Oberwolfach Seminars \textbf{38}:67--93 (2008) 

\bibitem{tubes}
Sharifmoghaddam, K., Maleczek, R., Nawratil, G.: Generalizing rigid-foldable tubular structures of T-hedral type. Mechanics Research Communications \textbf{132}:104151 (2023) 

\bibitem{kiumars2}
Sharifmoghaddam, K., Mundilova, K., Nawratil, G., Tachi, T.: Woven Rigidly Foldable T-hedral Tubes Along Translational Surfaces. Origami8, Volume I, pages 141--156, Springer (2026) 

\bibitem{kiumars3}
Sharifmoghaddam, K., Nawratil, G., Rasoulzadeh, A., Tervooren, J.: Using Flexible Trapezoidal Quad-Surfaces for Transformable Design. Proc.\ of IASS Annual Symposia, IASS 2020/21 Surrey Symposium: Transformable structures (IASS WG 15), pages 3236--3248, IASS (2021) 

\bibitem{stachel2}
Stachel, H.: 
A kinematic approach to Kokotsakis meshes.
Computer Aided Geometric Design 
\textbf{27}(6):428--437 (2010)

\bibitem{tile}
Stachel, H.: A Flexible Planar Tessellation with a Flexion Tiling a Cylinder of Revolution. Journal for Geometry and Graphics \textbf{16}(2):153--170 (2012)

\bibitem{tachauer}
Tachauer, A.: \"Uber diejenigen {R}otationsfl\"achen, auf denen zwei {S}charen geod\"atischer {L}inien ein konjugiertes {S}ystem bilden. 
Archiv der Mathematik und Physik \textbf{6}:60--84 (1903)

\bibitem{tachi_parallel}
Tachi, T.:
Composite rigid-foldable curved origami structure. 
Proceedings of Transformables 2013, 6 pages (2013)

\bibitem{voss_orig}
Voss, A.: \"Uber diejenigen {F}l\"achen, auf denen geod\"atische {L}inien ein konjugiertes {S}ystem bilden. 
Münchener Berichte \textbf{18}:95--102 (1888)

\bibitem{voss_ency}
Voss, A.: Abbildung und {A}bwicklung zweier {F}l\"achen aufeinander. Encyklop\"adie der mathematischen Wissenschaften mit Einschluss ihrer Anwendungen \textbf{III}(3):355--440 (1903) 


\bibitem{wunderlich}
Wunderlich, W.: On deformable nine-bar linkages with six triple joints.
Proc.\ of Koninklijke Nederlandse Akademie van Wetenschappen \textbf{A79}(3):257--262 (1976) 

\end{thebibliography}
\end{document}